\let\csname equation*\endcsname\relax
\let\csname endequation*\endcsname\relax
\newtheorem{theorem}{Theorem}
\newtheorem{proposition}[theorem]{Proposition}
\newtheorem{corollary}[theorem]{Corollary}
\newenvironment{proof}{\textbf{Proof:}}{\rule{0pt}{0pt}\hfill$\Box$}
\newenvironment{proofsketch}{\textbf{Proof (sketch):}}{\rule{0pt}{0pt}\hfill$\Box$\par}
\def\C{\mathbb{C}}
\def\F{\mathbb{F}}
\def\wgt{\mathop{\mathrm{wgt}}}
\begin{document}

\title{New Quantum Codes from CSS Codes}

\author{Markus Grassl${}^{1,2\,\orcidlink{0000-0002-3720-5195}}$}

\address{${}^1$ International Centre for Theory of Quantum
  Technologies (ICTQT),\\\phantom{${}^2$} University of Gdansk, 80-309 Gda\'nsk, Poland}
\address{${}^2$ Max Planck Institute for the Science of Light, 91058 Erlangen, Germany}
  
\ead{markus.grassl@ug.edu.pl}
\vspace{10pt}
\begin{indented}
\item[]December 2022
\end{indented}

\begin{abstract}
We present a new propagation rule for CSS codes. Starting with a CSS
code $[\![n,k,d]\!]_q$, we construct a CSS code with parameters
$[\![n-2,k,d-1]\!]_q$. In general, one would only obtain a code with
parameters $[\![n-2,k,d-2]\!]_q$. The construction applies to
asymmetric quantum codes from the CSS construction as well.
\end{abstract}

\vspace{2pc}
\noindent{\it Keywords}: Quantum error correction, CSS codes,
asymmetric quantum codes

\section{Introduction}
Quantum error-correcting codes are an essential ingredient for the
realization of universal fault-tolerant quantum computers.  It turns
out that, compared to general stabilizer codes, the class of so-called
CSS codes \cite{CaSh96,Ste96} allows more simple, direct
implementations of a larger class of fault-tolerant operations.  For
example, any CSS code allows the transversal implementation of the
CNOT gate \cite{Got98}.

While the error-correcting capabilities of CSS codes are in general
less than those of the larger class of stabilizer codes \cite{CEL99},
there are also a couple of advantages in terms of
error-correction. The complexity of error correction is exponential in
the number of independent syndromes \cite{HsLG11}. For CSS codes, the
syndromes can be grouped into syndromes for $X$- and $Z$-errors, and
error correction can be performed independently for the two types of
errors. This yields a square-root speedup.  Likewise, in order to
determine the minimum distance of a CSS code, one has to compute that
of two codes over an alphabet of size $q$, while for stabilizer codes,
the alphabet size is $q^2$.  There is again a square-root advantage.

After a short summary of the required basic notions of both classical
and quantum codes in the next section, we present our main results in
Section \ref{sec:main}.  Then we apply the new construction to CSS codes
$[\![n,0,d]\!]_2$ for qubits, yielding explicit constructions of codes
with the best parameters known to date.

\section{Background}
In this section, we recall basic definitions and properties of both
classical and quantum error-correcting codes. For more details, see,
e.g., the book \cite{MS77} for classical codes and the articles
\cite{KKKS06,Gra21} for quantum codes.
\subsection{Classical codes}
Let $q=p^m$ be a prime power. Then there is a finite field $\F_q$ with
$q$ elements. By $C=[n,k,d]_q$ we denote a \emph{linear block} code of
length $n$, dimension $k$, and minimum distance (at least) $d$. The
\emph{minimum (Hamming) distance} is the minimum number of positions
in which two codewords of $C$ differ.  For linear codes, the minimum
distance equals the \emph{minimum (Hamming) weight} of the code, i.e.,
the minimum number of positions at which a non-zero codeword of $C$ is
non-zero.

The (Euclidean) \emph{dual code} $C^\perp=[n,n-k,d^\perp]_q$ is defined as
\begin{alignat}{5}
  C^\bot=\{\mathbf{v}\in\F_q\colon
  \mathbf{v}\cdot\mathbf{c}=\sum_{i=1}^n c_i v_i =0\text{ for
    all $\mathbf{c}\in C$}\}.
\end{alignat}

For $1\le i\le n$, the \emph{shortened code} $\sigma_i(C)$ is obtained
by selecting all codewords $\mathbf{c}\in C$ that are zero at the
$i$-th position, and then deleting that position, i.e.,
\begin{alignat}{5}
  \sigma_i(C) =\{ (c_1,\ldots,c_{i-1},c_{i+1},\ldots,c_n)\in\F_q^{n-1}\colon
  (c_1,\ldots,c_{i-1},0,c_{i+1},\ldots,c_n)\in C\}.\label{eq:shortening}
\end{alignat}
As we are taking a subset of the codewords of the original code, the
minimum distance does not decrease.  Similarly, the \emph{punctured
  code} $\pi_i(C)$ is obtained by deleting the $i$-th coordinate of
$C$, i.e.,
\begin{alignat}{5}
  \pi_i(C)=\{ (c_1,\ldots,c_{i-1},c_{i+1},\ldots,c_n)\in\F_q^{n-1}\colon (c_1,\ldots,c_{i-1},c_i,c_{i+1},\ldots,c_n)\in C\}.\label{eq:puncturing}
\end{alignat}
In this case, the minimum distance is in general decrease by one, as
we are removing one coordinate.  When we omit the index $i$,
$\sigma(C)$ and $\pi(C)$ refer to the shortened/punctured code with
respect to the last position $i=n$.

The following proposition summarizes properties of the shortened and
punctured code (see, e.g., \cite[Sections 1.5.1 and 1.5.3]{HuPl10}).
\begin{proposition}\label{prop:shorten_puncture}
  For a linear code $C=[n,k,d]_q$ with $d>1$ and coordinate $i$ not
  being identically zero, the shortened and punctured code have
  parameters
  \begin{alignat}{5}
     \sigma_i(C)&{}=[n-1,k-1,d]_q\label{eq:para_shorten}\\
 \text{and}\qquad \pi_i(C)&{}=[n-1,k,d-1]_q.\label{eq:para_puncture}
  \end{alignat}
  Moreover, shortening the dual code equals the dual of the punctured
  code and vice versa:
  \begin{alignat}{5}
    \sigma_i(C^\perp)&{}=\left(\pi_i(C)\right)^\bot\\
\text{and}\qquad \pi_i(C^\perp)&{}=\left(\sigma_i(C)\right)^\bot.
  \end{alignat}
\end{proposition}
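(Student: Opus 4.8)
The plan is to treat the four claims in turn, deriving the two parameter formulas from elementary rank and weight arguments and then obtaining both duality identities from a single inner-product computation.

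First I would handle the shortened code. Consider the evaluation map $\phi\colon C\to\F_q$ sending $\mathbf{c}\mapsto c_i$. Since coordinate $i$ is not identically zero, $\phi$ is surjective, and its kernel is exactly the set of codewords of $C$ vanishing at position $i$. Deleting that (zero) coordinate is a linear isomorphism from $\ker\phi$ onto $\sigma_i(C)$, so by rank--nullity $\dim\sigma_i(C)=k-1$. This isomorphism also preserves Hamming weight, because the removed entry is $0$ throughout $\ker\phi$; hence every nonzero word of $\sigma_i(C)$ has weight at least $d$, giving \eqref{eq:para_shorten}.

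Next, for the punctured code I would use the coordinate-deletion map $\psi\colon C\to\F_q^{n-1}$, whose image is $\pi_i(C)$ by definition. Its kernel consists of the codewords of $C$ supported only on position $i$, i.e.\ the scalar multiples of $\mathbf{e}_i$ lying in $C$. As $d>1$, the code contains no nonzero word of weight $1$, so $\ker\psi=\{\mathbf{0}\}$ and $\psi$ is injective; thus $\dim\pi_i(C)=k$. Deleting a single coordinate lowers the weight of any word by at most one, and by injectivity nonzero words of $C$ map to nonzero words of $\pi_i(C)$; therefore the minimum weight drops by at most one, yielding \eqref{eq:para_puncture}.

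For the duality statements I would first prove $\sigma_i(C^\perp)=(\pi_i(C))^\perp$ by double inclusion, taking $i=n$ without loss of generality. A vector $\mathbf{v}=(v_1,\dots,v_{n-1})$ lies in $\sigma_n(C^\perp)$ exactly when its zero-extension $(v_1,\dots,v_{n-1},0)$ is orthogonal to every $\mathbf{c}\in C$; since the last coordinate contributes nothing to the inner product, this is equivalent to $\sum_{j=1}^{n-1}v_jc_j=0$ for all $\mathbf{c}\in C$, i.e.\ to $\mathbf{v}$ being orthogonal to the truncation $(c_1,\dots,c_{n-1})$. As those truncations are precisely the words of $\pi_n(C)$, this says exactly $\mathbf{v}\in(\pi_n(C))^\perp$. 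The second identity then follows for free: applying the first to $C^\perp$ and using $(C^\perp)^\perp=C$ gives $\sigma_i(C)=(\pi_i(C^\perp))^\perp$, and dualizing both sides yields $\pi_i(C^\perp)=(\sigma_i(C))^\perp$. The only place demanding care is the punctured-code bound, where the hypothesis $d>1$ is exactly what guarantees injectivity of $\psi$; without it a nonzero codeword could puncture to zero and both the claimed dimension and the distance could fail. Everything else is bookkeeping, and the duality reduces to the observation that extending by a zero coordinate converts orthogonality against $C$ into orthogonality against $\pi_i(C)$.
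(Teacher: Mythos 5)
Your proof is correct and complete: the rank--nullity argument for $\sigma_i(C)$, the injectivity of coordinate deletion under $d>1$ for $\pi_i(C)$, and the zero-extension computation for the duality identities are all sound, and you correctly isolate $d>1$ as the hypothesis that prevents a weight-one codeword from collapsing under puncturing. The paper itself offers no proof of this proposition --- it is quoted with a citation to Huffman and Pless --- and your argument is essentially the standard textbook one found there, so there is nothing to fault and no divergence of approach to report.
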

When coordinate $i$ of the code $C$ is identically zero, the dimension
of the shortened code remains $k$, as \eqref{eq:shortening} involves
all codewords of $C$.  Likewise, for $d=1$, when
puncturing the code at the position $i$ where a codeword of weight $1$
is non-zero, the dimension drops to $k-1$, as deleting that position
results in the zero vector.  Moreover, in some cases the
minimum distance of the shortened or punctured code might be larger,
e.g., when all minimum weight codewords are non-zero at that
position (shortening) or all minimum weight words are zero at that
position (puncturing).

Later, we will use so-called quadratic residue (QR) codes. Their
properties are summarized as follows (for the precise definitions,
see, e.g., \cite[Chapter 16]{MS77} for prime fields, and \cite[Section
  6.6]{HuPl10} or \cite[Section 4.6]{BBFKKW06} for the general
case).
\begin{proposition}[QR codes]\label{prop:QRcodes}
  Let $q$ be a prime power and $n>2$ a prime such that $q$ is a
  quadratic residue modulo $n$ (i.e., $q=a^2\bmod n$ for some
  integer $a$). The quadratic residue codes $\mathcal{Q}$,
  $\overline{\mathcal{Q}}$, $\mathcal{N}$, and
  $\overline{\mathcal{N}}$ are cyclic codes of length $n$ over $\F_q$
  with the following properties:
  \begin{enumerate}[leftmargin=*]
    \item $\overline{\mathcal{Q}} < \mathcal{Q}$ and  $\overline{\mathcal{N}} < \mathcal{N}$
    \item $\mathcal{Q}$ and $\mathcal{N}$ are equivalent with
      $\mathcal{Q}=[n,(n+1)/2,d]_q$ and $\mathcal{N}=[n,(n+1)/2,d]_q$.
    \item $\overline{\mathcal{Q}}$ and $\overline{\mathcal{N}}$ are
      equivalent with $\overline{\mathcal{Q}}=[n,(n-1)/2,d+1]_q$ and
      $\overline{\mathcal{N}}=[n,(n-1)/2,d+1]_q$.
    \item $\mathcal{Q}^\perp = \overline{\mathcal{Q}}$ and
      $\mathcal{N}^\perp=\overline{\mathcal{N}}$ for $n\equiv 3\mod 4$
    \item $\mathcal{Q}^\perp = \overline{\mathcal{N}}$ and
      $\mathcal{N}^\perp=\overline{\mathcal{Q}}$ for $n\equiv 1\mod 4$
  \end{enumerate}
  The \emph{extended QR codes} $\widehat{\mathcal{Q}}$ and
  $\widehat{\mathcal{N}}$ are obtained adding one coordinate to $\mathcal{Q}$ and
  $\mathcal{N}$, respectively, such that
  $\widehat{\mathcal{Q}}^\perp=\widehat{\mathcal{N}}$ for $n\equiv
  1\mod 4$, while $\widehat{\mathcal{Q}}^\perp=\widehat{\mathcal{Q}}$
  and $\widehat{\mathcal{N}}^\perp=\widehat{\mathcal{N}}$ for $n\equiv
  3\mod 4$. The extended QR codes have the following properties:
  \begin{enumerate}[leftmargin=*]
    \item[(vi)] $\widehat{\mathcal{Q}}=[n+1,(n+1)/2,d+1]_q$ and
      $\widehat{\mathcal{N}}=[n+1,(n+1)/2,d+1]_q$
    \item[(vii)] The automorphism group of the extended QR code
      $\widehat{\mathcal{Q}}$ contains the projective special linear
      group $PSL_2(n)$.
  \end{enumerate}
  
\end{proposition}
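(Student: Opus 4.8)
The plan is to realize all four codes as cyclic codes through their generator polynomials and to reduce each claim to a statement about the quadratic residues modulo $n$. Fix a primitive $n$-th root of unity $\beta$ in a suitable extension of $\F_q$, write $R$ and $N$ for the sets of quadratic residues and non-residues modulo $n$, and set $g_R(x)=\prod_{r\in R}(x-\beta^r)$ and $g_N(x)=\prod_{s\in N}(x-\beta^s)$. The hypothesis that $q$ is a quadratic residue modulo $n$ is exactly what forces every $q$-cyclotomic coset to lie inside $R$ or inside $N$, so that $g_R,g_N\in\F_q[x]$; together with $x^n-1=(x-1)g_R(x)g_N(x)$ this lets me define $\mathcal{Q}=\langle g_R\rangle$, $\mathcal{N}=\langle g_N\rangle$, $\overline{\mathcal{Q}}=\langle (x-1)g_R\rangle$, and $\overline{\mathcal{N}}=\langle (x-1)g_N\rangle$, with dimensions $(n+1)/2$ and $(n-1)/2$ read off from the degrees. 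Claim (i) is then immediate, since $(x-1)g_R$ is a multiple of $g_R$ and the two dimensions differ by one.

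For the equivalences in (ii) and (iii) I would use the multiplier $\mu_a\colon i\mapsto ai\bmod n$ for a non-residue $a$: it permutes the coordinates, sends $R$ to $N$, hence carries $g_R$ to $g_N$ up to reordering, and therefore maps $\mathcal{Q}$ to $\mathcal{N}$ and $\overline{\mathcal{Q}}$ to $\overline{\mathcal{N}}$. This gives permutation equivalence and the equality of parameters in both items. The dual relations (iv) and (v) I would obtain from the standard description of the dual of a cyclic code: $\mathcal{Q}^\perp$ is generated by the reciprocal of the check polynomial $(x^n-1)/g_R=(x-1)g_N$, so its roots are the inverses $\beta^{-s}$ with $s\in N$ together with $1$. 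The whole dichotomy then hinges on whether $-1$ is a residue: for $n\equiv1\bmod 4$ one has $-N=N$, giving $\mathcal{Q}^\perp=\overline{\mathcal{N}}$, while for $n\equiv3\bmod 4$ one has $-N=R$, giving $\mathcal{Q}^\perp=\overline{\mathcal{Q}}$; the statements for $\mathcal{N}$ follow symmetrically.

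The two genuinely substantial points are the ``$+1$'' in the minimum distances and the automorphism group. For the former I would first identify $\overline{\mathcal{Q}}$ as the even-like subcode $\{c\in\mathcal{Q}\colon c(1)=0\}$, since a codeword of $\mathcal{Q}$ is divisible by $(x-1)g_R$ iff it is divisible by $g_R$ and vanishes at $1$, and $c(1)$ is the coordinate sum. Then $d(\overline{\mathcal{Q}})\ge d(\mathcal{Q})+1$ follows once I know that every minimum-weight codeword of $\mathcal{Q}$ is odd-like, which is the content of the classical square-root-bound analysis: the product $c\cdot\mu_a(c)$ in $\F_q[x]/(x^n-1)$ of an odd-like minimum-weight codeword with its multiplier image lies in $\mathcal{Q}\cap\mathcal{N}=\langle g_Rg_N\rangle=\langle\mathbf{1}\rangle$, must be a nonzero multiple of the repetition word, and hence forces $d(\mathcal{Q})^2\ge n$; a closer examination of the even-like words then shows the minimum weight is attained only by odd-like words. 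Extending each code by the overall parity coordinate preserves the dimension and raises the odd-like words to weight $\ge d+1$, which yields (vi) together with the self-dual or dual pairing of the extended codes coming from (iv) and (v).

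I expect the main obstacle to be part (vii), the Gleason--Prange theorem. Here the $n+1$ coordinates of $\widehat{\mathcal{Q}}$ should be indexed by the projective line $\mathbb{P}^1(\F_n)=\F_n\cup\{\infty\}$, on which $PSL_2(n)$ acts by fractional linear transformations. The translations $x\mapsto x+b$ give the cyclic shifts, which are automorphisms of the cyclic code, and the maps $x\mapsto a^2x$ give the squaring multipliers; both families are routine to check. The hard step is invariance under the inversion $x\mapsto -1/x$: this requires the delicate Gleason--Prange computation relating the action on codewords to the quadratic-residue structure of the coordinates, and it is the one place where no direct elementary argument suffices. Once invariance under all three generating families is established, they generate $PSL_2(n)$ and the claim follows.
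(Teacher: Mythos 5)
The paper does not actually prove this proposition: it is stated as background and delegated to the standard references (MacWilliams--Sloane Ch.~16, Huffman--Pless \S 6.6), so your sketch can only be measured against the textbook development. Most of it matches that development exactly: the definition via $g_R$, $g_N$ and the splitting $x^n-1=(x-1)g_Rg_N$, the use of the cyclotomic-coset argument to get $g_R,g_N\in\F_q[x]$, the multiplier $\mu_a$ for (ii)--(iii), and the reciprocal-check-polynomial computation with the $-1\in R$ versus $-1\in N$ dichotomy for (iv)--(v) are all correct and are the standard route.

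There is, however, a genuine gap in how you obtain the ``$+1$'' in (iii) and (vi). You claim that the square-root-bound computation, plus ``a closer examination of the even-like words,'' shows that every minimum-weight codeword of $\mathcal{Q}$ is odd-like. The square-root argument gives no information about even-like words: for $c\in\overline{\mathcal{Q}}$ the product $c(x)\,c(x^a)$ has zeros at \emph{all} $n$-th roots of unity including $1$, so it vanishes identically in $\F_q[x]/(x^n-1)$ and yields no weight inequality. The standard (and, as far as I know, only) proof that the minimum weight of $\mathcal{Q}$ is its minimum odd-like weight goes \emph{through} part (vii): one first establishes the Gleason--Prange theorem, then uses the transitivity of $PSL_2(n)$ on the $n+1$ coordinates of $\widehat{\mathcal{Q}}$ to move a nonzero coordinate of a minimum-weight extended codeword to the extension position, giving $d(\widehat{\mathcal{Q}})=d+1$ and hence $d(\overline{\mathcal{Q}})\ge d+1$. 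Your sketch defers (vii) to the very end as the ``main obstacle'' while implicitly consuming its consequence earlier, so the logical order must be inverted: prove (vii) first, then deduce (iii) and (vi) from transitivity. A second, smaller point: for general $q$ the extension in (vi) is not the plain parity sum but a scaled coordinate (chosen so that the duality relations $\widehat{\mathcal{Q}}^\perp=\widehat{\mathcal{Q}}$ or $\widehat{\mathcal{Q}}^\perp=\widehat{\mathcal{N}}$ hold); the paper's phrasing ``such that'' hides exactly this choice, and your sketch should make it explicit.
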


\subsection{Quantum codes}
By $\mathcal{C}=[\![n,k,d]\!]_q$ we denote a quantum-error correcting
code (QECC) that is a subspace of dimension $q^k$ of the complex
vector space $\mathcal{H}=\left(\C^q\right)^{\otimes
  n}\cong\C^{q^n}$. When the minimum distance is $d$, all errors
acting on at most $d-1$ of the $n$ subsystems (tensor components) can
be detected or act trivially on the code. Alternatively, one can
correct up to $d-1$ erasures or up to $\lfloor (d-1)/2\rfloor$
errors. Most of the QECCs in the literature are so-called stabilizer
codes which can be constructed from certain additive codes over the
field $\F_{q^2}$ \cite{KKKS06}.  Here we focus on so-called CSS codes,
named after Calderbank \& Shor \cite{CaSh96} and Steane \cite{Ste96}.

\begin{proposition}[CSS construction]
  Let $C_1=[n,k_1,d_1']_q$ and $C_2=[n,k_2,d_2']_q$ be linear codes over
  $\mathbb{F}_q$ with $C_2^\bot\le C_1$. Then there is a CSS code with
  parameters $\mathcal{C}=[\![n,k_1+k_2-n,d]\!]_q$. The minimum
  distance is $d = \min(d_1,d_2)\ge \min(d_1',d_2')$. For $C_2^\perp\ne C_1$, 
  \begin{alignat}{5}
    d_1&{}=&\wgt(C_1\setminus C_2^\perp)&{}=\min\{\wgt({\bf v})\colon{\bf v}\in C_1\setminus C_2^\perp\}&&{}\ge d_1'\\
    \text{and}\qquad
    d_2&{}=&\wgt(C_2\setminus C_1^\perp)&{}=\min\{\wgt({\bf v})\colon{\bf v}\in C_2\setminus  C_1^\perp\}&&{}\ge d_2'.
  \end{alignat}
  For $C_2^\perp=C_1$, $d_1=d_1'$ and $d_2=d_2'$. Then the CSS code has parameters
  $\mathcal{C}=[\![n,0,d]\!]_q$ with $d=\min(d_1',d_2')$.
\end{proposition}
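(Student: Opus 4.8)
The plan is to realize the asserted code as a stabilizer code, reading its parameters off the stabilizer group and its normalizer. I fix generalized Pauli operators on $(\C^q)^{\otimes n}$: for $\mathbf a,\mathbf b\in\F_q^n$ let $X^{\mathbf a}$ and $Z^{\mathbf b}$ denote the $X$- and $Z$-type operators, normalized so that $X^{\mathbf a}$ and $Z^{\mathbf b}$ commute precisely when $\mathbf a\cdot\mathbf b=0$ in $\F_q$ (over an extension field one composes the inner product with the field trace, which vanishes whenever the inner product does). I then let the stabilizer $S$ be generated by $\{X^{\mathbf a}:\mathbf a\in C_2^\perp\}$ together with $\{Z^{\mathbf b}:\mathbf b\in C_1^\perp\}$. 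The group $S$ is abelian: the only potentially nontrivial relations are between an $X$- and a $Z$-generator, and these commute because every $\mathbf a\in C_2^\perp$ lies in $C_1$ by the hypothesis $C_2^\perp\le C_1$, so $\mathbf a\cdot\mathbf b=0$ for all $\mathbf b\in C_1^\perp$.

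For the dimension I count $n-k_2$ independent $X$-generators and $n-k_1$ independent $Z$-generators, hence $2n-k_1-k_2$ independent stabilizer generators (the two types are independent since one set is pure-$X$ and the other pure-$Z$); the joint $+1$-eigenspace then has dimension $q^{n-(2n-k_1-k_2)}=q^{k_1+k_2-n}$, giving $k=k_1+k_2-n$ encoded qudits. Next I determine the normalizer $N(S)$. A Pauli $X^{\mathbf v}Z^{\mathbf w}$ commutes with every $Z$-generator iff $\mathbf v\in(C_1^\perp)^\perp=C_1$, and with every $X$-generator iff $\mathbf w\in(C_2^\perp)^\perp=C_2$. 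Thus $N(S)$ consists of the operators with $(\mathbf v,\mathbf w)\in C_1\times C_2$, whereas $S$ consists of those with $(\mathbf a,\mathbf b)\in C_2^\perp\times C_1^\perp$; a nontrivial logical operator is one lying in $N(S)\setminus S$.

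The minimum distance is the least weight of a Pauli in $N(S)\setminus S$, where the weight of $X^{\mathbf v}Z^{\mathbf w}$ is $|\mathrm{supp}(\mathbf v)\cup\mathrm{supp}(\mathbf w)|\ge\max(\wgt(\mathbf v),\wgt(\mathbf w))$. For the upper bound, pure-$X$ logicals with $\mathbf v\in C_1\setminus C_2^\perp$ realize weight $d_1$ and pure-$Z$ logicals with $\mathbf w\in C_2\setminus C_1^\perp$ realize $d_2$, so $d\le\min(d_1,d_2)$. For the lower bound, any element of $N(S)\setminus S$ has $\mathbf v\notin C_2^\perp$ or $\mathbf w\notin C_1^\perp$; in the first case its weight is at least $\wgt(\mathbf v)\ge d_1$, in the second at least $\wgt(\mathbf w)\ge d_2$, so $d\ge\min(d_1,d_2)$, and equality follows. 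The bounds $d_1\ge d_1'$ and $d_2\ge d_2'$ are then immediate: since $0\in C_2^\perp$ we have $C_1\setminus C_2^\perp\subseteq C_1\setminus\{0\}$, and minimizing the weight over the smaller set can only increase it; symmetrically for $d_2$.

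The degenerate case $C_2^\perp=C_1$ forces $k_1+k_2=n$, hence $k=0$ and $N(S)=S$; the distance is then the least weight of a nonidentity stabilizer element $X^{\mathbf a}Z^{\mathbf b}$ with $(\mathbf a,\mathbf b)\in C_1\times C_2$, and the same support-union estimate yields $d=\min(d_1',d_2')$. The one step needing genuine care is the distance lower bound, namely ruling out that a mixed $X/Z$ logical could have weight below $\min(d_1,d_2)$. This is precisely where the CSS structure is used: because the $X$- and $Z$-parts are confined to $C_1$ and $C_2$ and a nontrivial logical must be nontrivial in at least one type, the support-union inequality collapses the quantum distance to the two classical coset-minimum weights $d_1$ and $d_2$.
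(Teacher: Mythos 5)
Your proof is correct. Note that the paper itself offers no proof of this proposition: it is stated as background, with the construction attributed to Calderbank--Shor and Steane, so there is no argument in the text to compare yours against. What you give is the standard stabilizer-formalism derivation: generators $X^{\mathbf a}$ for $\mathbf a\in C_2^\perp$ and $Z^{\mathbf b}$ for $\mathbf b\in C_1^\perp$, commutativity from $C_2^\perp\le C_1$, the dimension count $n-(2n-k_1-k_2)=k_1+k_2-n$, the normalizer identified with $C_1\times C_2$ via $(C_1^\perp)^\perp=C_1$ and $(C_2^\perp)^\perp=C_2$, and the distance read off from $N(S)\setminus S$ using the support-union weight. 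You correctly isolate the one step that needs care (a mixed $X/Z$ logical cannot beat $\min(d_1,d_2)$ because it must be non-trivial in at least one type, and the union of supports dominates each individual Hamming weight), you handle the trace form for non-prime $q$, and you treat the degenerate case $C_2^\perp=C_1$ separately with the $k=0$ convention that the distance is the minimum weight of a non-identity element of $S=N(S)$, which is exactly what the proposition asserts there. No gaps.
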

As CSS codes are a subclass of stabilizer codes, their parameters need
not be optimal. On the other hand, CSS codes allow to correct $X$- and
$Z$-errors independently.  Steane \cite{Ste96} uses the notation
$\{n,k,d_1,d_2\}$ in order to stress that for $d_1 > d_2$, the code
can correct more, say, $Z$-errors than $X$-errors.  Such codes are
known as \emph{asymmetric quantum codes} \cite{IoMe07}, and the
notation $[\![n,k,d_1/d_2]\!]_q$ is used in the literature
\cite{SKR09}.  To avoid possible confusion, we will use the notation
$[\![n,k,\{d_1,d_2\}]\!]_q$, which indicates both minimum distances
and also stresses that the role of the codes $C_1$ and $C_2$ in the
CSS construction can be swapped.

\section{Main results}\label{sec:main}
For a quantum code $\mathcal{C}=[\![n,k,d]\!]_q$ that is not the
tensor product of quantum codes of lengths $n-1$ and $1$, deleting any
coordinate results in a code $\mathcal{C}=[\![n-1,k,d-1]\!]_q$ (see
\cite[Theorem 6 d)]{CRSS98}). If the code is
$\mathcal{C}=[\![n,k,d]\!]_q$ a tensor product
$\mathcal{C}=\mathcal{C}_{n-1}\otimes\mathcal{C}_1$ of codes of lengths
$n-1$ and $1$, respectively, then deleting the last coordinate yields
the code $\mathcal{C}_{n-1}$ which has parameters
$\mathcal{C}_{n-1}=[\![n-1,k,d]\!]_q$ for
$\mathcal{C}_1=[\![1,0,1]\!]_q$, or a code
$\mathcal{C}_{n-1}=[\![n-1,k-1,d']\!]_q$ for
$\mathcal{C}_1=[\![1,1,1]\!]_q$. In the latter case, the original code
has distance $d=1$, and $d'$ is the distance of the tensor factor
$\mathcal{C}_{n-1}$. 

In the following, we show that for CSS codes, we can reduce the length
by $2$ while the minimum distance is decreased only by $1$.

\begin{theorem}\label{thm:puncture_CSS}
  Let $\mathcal{C}=[\![n,k,\{d_1,d_2\}]\!]_q$ with $d_1>1$ be a CSS code.
  Then for all $i\in\{1,\ldots,n\}$, there is a CSS code
  $\mathcal{C}_i=[\![n-1,k,\{d_1-1,d_2\}]\!]_q$.
\end{theorem}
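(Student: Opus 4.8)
The plan is to obtain $\mathcal{C}_i$ from the CSS construction applied to a punctured and a shortened version of the two defining codes. Write $\mathcal{C}$ as the CSS code associated with a pair $C_2^\perp\le C_1$, so that $d_1=\wgt(C_1\setminus C_2^\perp)$, $d_2=\wgt(C_2\setminus C_1^\perp)$, and $k=\dim C_1-\dim C_2^\perp$. Since we wish to lower the distance governed by $C_1$ while leaving $d_2$ untouched, I would set $C_1'=\pi_i(C_1)$ and $C_2'=\sigma_i(C_2)$, i.e.\ puncture the first code and shorten the second at the same coordinate $i$.

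First I would check that $(C_1',C_2')$ is an admissible CSS pair. By Proposition~\ref{prop:shorten_puncture}, $(C_2')^\perp=(\sigma_i(C_2))^\perp=\pi_i(C_2^\perp)$, and since puncturing is the coordinate projection $\pi_i\colon\F_q^n\to\F_q^{n-1}$, a linear map preserves the inclusion $C_2^\perp\le C_1$, giving $(C_2')^\perp=\pi_i(C_2^\perp)\le\pi_i(C_1)=C_1'$. Hence there is a CSS code $\mathcal{C}_i$ built from $C_1'$ and $C_2'$; it remains to pin down its parameters.

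For the distances I would argue directly on codewords. On the $C_2$-side, the map $\{\mathbf{v}\in C_2\colon v_i=0\}\to\sigma_i(C_2)$ deleting coordinate $i$ is a weight-preserving bijection, and a word lies in $(C_1')^\perp=\sigma_i(C_1^\perp)$ exactly when its unique lift lies in $C_1^\perp$; thus the new $C_2$-distance equals $\min\{\wgt(\mathbf{v})\colon\mathbf{v}\in C_2\setminus C_1^\perp,\ v_i=0\}\ge d_2$. On the $C_1$-side, any $\mathbf{c}'\in\pi_i(C_1)\setminus\pi_i(C_2^\perp)$ has a lift $\mathbf{c}\in C_1$, and this lift cannot lie in $C_2^\perp$ (otherwise its image would lie in $\pi_i(C_2^\perp)$), so $\mathbf{c}\in C_1\setminus C_2^\perp$ satisfies $\wgt(\mathbf{c})\ge d_1$; deleting one coordinate costs at most one, so $\wgt(\mathbf{c}')\ge d_1-1$. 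The two minimum distances are therefore at least $d_1-1$ and $d_2$, as claimed.

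The step I expect to be the crux is the dimension, for this is where the hypothesis $d_1>1$ is indispensable. Using the dimension formula for $\pi_i$ restricted to each code, $k^{\mathrm{new}}=\dim\pi_i(C_1)-\dim\pi_i(C_2^\perp)=k-\dim(C_1\cap\langle\mathbf{e}_i\rangle)+\dim(C_2^\perp\cap\langle\mathbf{e}_i\rangle)$. The two correction terms lie in $\{0,1\}$ and, because $C_2^\perp\le C_1$, they differ only if $\mathbf{e}_i\in C_1\setminus C_2^\perp$ — that is, a weight-one word in $C_1\setminus C_2^\perp$, which would force $d_1=1$. The assumption $d_1>1$ excludes this, so the corrections cancel and $k^{\mathrm{new}}=k$. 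This is precisely the case distinction separating the present rule from the naive deletion of a coordinate, which costs two in distance. Finally I would note that the degenerate case $C_2^\perp=C_1$ (so $k=0$) is handled identically, reading $d_1$ as $\wgt(C_1)$ and applying the puncturing bound of Proposition~\ref{prop:shorten_puncture} to $C_1$ directly.
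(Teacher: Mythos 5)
Your proposal is correct and follows essentially the same route as the paper: puncture $C_1$ and shorten $C_2$ at coordinate $i$, use $(\sigma_i(C_2))^\perp=\pi_i(C_2^\perp)\le\pi_i(C_1)$ to get a valid CSS pair, and bound the distances by lifting low-weight words. Your dimension count via the kernel of $\pi_i$ on $C_1$ and $C_2^\perp$ is just a compact repackaging of the paper's two-case analysis (coordinate $i$ of $C_2$ identically zero or not), with $d_1>1$ excluding $\mathbf{e}_i\in C_1\setminus C_2^\perp$ in exactly the same way.
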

\begin{proof}
  Let $C_1=[n,k_1,d_1']_q$ and $C_2=[n,k_2,d_2']_q$ be the classical
  codes used for the construction of the CSS code
  $\mathcal{C}=[\![n,k,\{d_1,d_2\}]\!]_q$. Then $C_2^\perp\le C_1$ and
  $k=k_1+k_2-n$. For a fixed position $i\in\{1,\ldots,n\}$, consider
  the codes $\pi_i(C_1)$ and $\sigma_i(C_2)$ of length $n-1$. By
  Proposition \ref{prop:shorten_puncture},
  $\left(\sigma_i(C_2)\right)^\perp=\pi_i(C_2^\perp)$, and hence
  $\left(\sigma_i(C_2)\right)^\perp\le \pi_i(C_1)$. Using the codes
  $\sigma_i(C_2)$ and $\pi_i(C_1)$ in the CSS construction, we obtain
  a code
  $\widetilde{\mathcal{C}}=[\![n-1,\widetilde{k},\{\widetilde{d}_1,\widetilde{d}_2\}]\!]_q$.

  It remains to show that the code has the claimed
  parameters. Concerning the dimension $k$, we distinguish two cases.
  \begin{enumerate}
    \item When the coordinate $i$ in the code $C_2$ is identically
      zero, then the shortened code has parameters
      $\sigma_i(C_2)=[n-1,k_2,d_2']_q$. Moreover, the dual code
      $C_2^\perp\le C_1$ contains a word of weight one that is
      non-zero at position $i$. Then the punctured code has parameters
      $\pi_i(C_1)=[n-1,k_1-1,\widetilde{d}_1']_q$. The dimension of
      the CSS code is $\widetilde{k}=(k_1-1)+k_2-(n-1)=k_1+k_2-n=k$.
  \item When the coordinate $i$ in the code $C_2$ is not identically
    zero, the shortened code has parameters
    $\sigma_i(C_2)=[n-1,k_2-1,d_2']_q$. Moreover, the dual code
    $C_2^\perp\le C_1$ does not contain a word of Hamming weight one
    that is non-zero only at position $i$. As $d_1>1$, $C_1$ does not
    contain such a word of weight one either. Then the punctured code
    has parameters $\pi_i(C_1)=[n-1,k_1,d_1'-1]_q$. Again
    $\widetilde{k}=k_1+(k_2-1)-(n-1)=k_1+k_2-n=k$.
  \end{enumerate}
  Concerning the minimum distances $\widetilde{d}_1$ and
  $\widetilde{d}_1$, first consider the case $k=0$.  In this case, the
  minimum distances of asymmetric CSS codes are equal those of the
  classical codes, and hence $\widetilde{d}_1=d_1-1$ and
  $\widetilde{d}_2=d_2$ as claimed. For $k>0$, we distinguish two
  cases. To simplify notation, we set $i=n$.
  \begin{enumerate}
  \item Assume that
    $\pi_i(C_1)\setminus\left(\sigma_i(C_2)\right)^\perp$ contains a
    word $\widetilde{\mathbf{v}}=(v_1,\ldots,v_{n-1})$ of weight
    strictly less than $d_1-1$. Then the code $C_1$ contains a word
    $\mathbf{v}=(v_1,\ldots,v_{n-1},v_n)$ of weight strictly less than
    $d_1=\wgt(C_1\setminus C_2^\perp)$. Therefore $\mathbf{v}\in
    C_2^\perp$, which implies
    $\widetilde{\mathbf{v}}\in\pi_i(C_2^\perp)=\left(\sigma_i(C_2)\right)^\perp$,
    a contradiction.  Hence $\widetilde{d}_1\ge d_1-1$.
  \item Assume that
    $\sigma_i(C_2)\setminus\left(\pi_i(C_1)\right)^\perp$ contains a
    word $\widetilde{\mathbf{u}}=(u_1,\ldots,u_{n-1})$ of weight
    strictly less than $d_2$. Then the code $C_2$ contains a word
    $\mathbf{u}=(u_1,\ldots,u_{n-1},0)$ of weight strictly less than
    $d_2=\wgt(C_2\setminus C_1^\perp)$. Therefore $\mathbf{u}\in
    C_1^\perp$, which implies
    $\widetilde{\mathbf{u}}\in\sigma_i(C_1^\perp)=\left(\pi_i(C_1)\right)^\perp$,
    a contradiction.  Hence $\widetilde{d}_2\ge d_2$.
  \end{enumerate}
\end{proof}

Note that this result can also be found in \cite[Theorem 4.2 (i)]{LaGu13},
using a slightly different proof technique.  La Guardia also discusses
the special case \cite[Theorem 4.2 (ii)]{LaGu13} that puncturing the
classical code $C_1$ does not decrease its minimum distance, and hence
the minimum distances of the resulting asymmetric quantum code do not
change. The following direct consequence of
Theorem~\ref{thm:puncture_CSS} appears to be new:

\begin{theorem}\label{thm:main}
  Let $\mathcal{C}=[\![n,k,\{d_1,d_2\}]\!]_q$ with $d_1,d_2>1$ be a
  CSS code. Then there is a CSS code
  $\widehat{\mathcal{C}}=[\![n-2,k,\{d_1-1,d_2-1\}]\!]_q$.
\end{theorem}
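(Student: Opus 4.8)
The plan is to derive this theorem as a double application of Theorem~\ref{thm:puncture_CSS}, using the symmetry of the notation $[\![n,k,\{d_1,d_2\}]\!]_q$ under interchanging the roles of $C_1$ and $C_2$. The key observation is that Theorem~\ref{thm:puncture_CSS} always reduces the \emph{first} of the two listed distances by one (leaving the second unchanged), and requires only that this first distance exceed $1$; so by reading the two distances in either order we can arrange to decrease whichever of $d_1$ or $d_2$ we like.

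First I would apply Theorem~\ref{thm:puncture_CSS} to $\mathcal{C}=[\![n,k,\{d_1,d_2\}]\!]_q$ as stated. Since $d_1>1$, this produces (for any choice of coordinate $i$) a CSS code $\mathcal{C}_i=[\![n-1,k,\{d_1-1,d_2\}]\!]_q$, reducing the length by one and the first distance by one while preserving the second distance $d_2$.

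Next I would view $\mathcal{C}_i$ as the code $[\![n-1,k,\{d_2,d_1-1\}]\!]_q$, which is legitimate because the unordered pair is unchanged and the CSS construction is symmetric in its two classical codes. Now the first listed distance is $d_2$, and by hypothesis $d_2>1$, so Theorem~\ref{thm:puncture_CSS} applies a second time and yields a CSS code $[\![n-2,k,\{d_2-1,d_1-1\}]\!]_q$. Since $\{d_2-1,d_1-1\}=\{d_1-1,d_2-1\}$, this is exactly the claimed code $\widehat{\mathcal{C}}=[\![n-2,k,\{d_1-1,d_2-1\}]\!]_q$.

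I do not expect a genuine obstacle here; the argument is a clean iteration. The one point deserving care --- and the sole place where the strengthened hypothesis $d_1,d_2>1$ (rather than just $d_1>1$) enters --- is the second step: to reduce $d_2$ I must first swap the two distances into the order expected by Theorem~\ref{thm:puncture_CSS}, and I must invoke $d_2>1$ to meet that theorem's hypothesis. Once this bookkeeping is in place, the conclusion follows immediately.
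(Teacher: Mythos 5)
Your proposal is correct and follows exactly the paper's own argument: one application of Theorem~\ref{thm:puncture_CSS} to reduce $d_1$, then a swap of the roles of $C_1$ and $C_2$ and a second application (using $d_2>1$) to reduce $d_2$. The only difference is that you spell out the bookkeeping of the swap more explicitly than the paper does, which is harmless.
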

\begin{proof}
  Using Theorem \ref{thm:puncture_CSS}, we obtain a CSS code
  $\mathcal{C}'=[\![n-1,k,\{d_1-1,d_2\}]\!]_q$. Switching the roles of
  $C_1$ and $C_2$ in the CSS construction, again using Theorem
  \ref{thm:puncture_CSS} we obtain a CSS code
  $\widehat{\mathcal{C}}=[\![n-2,k,\{d_1-1,d_2-1\}]\!]_q$.
\end{proof}

\begin{corollary}
  Let $\mathcal{C}=[\![n,k,d]\!]_q$ with $d>1$ be a
  CSS code. Then there is a CSS code
  $\widehat{\mathcal{C}}=[\![n-2,k,d-1]\!]_q$.
\end{corollary}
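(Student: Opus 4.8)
The plan is to recognize the symmetric CSS code $\mathcal{C}=[\![n,k,d]\!]_q$ as a special instance of an asymmetric CSS code and then invoke Theorem~\ref{thm:main}. Any CSS code is built from a pair of classical codes $C_1=[n,k_1,d_1']_q$ and $C_2=[n,k_2,d_2']_q$ with $C_2^\perp\le C_1$, and therefore carries two component minimum distances $d_1$ and $d_2$ as in the CSS construction. The single distance $d$ of a symmetric CSS code is exactly $d=\min(d_1,d_2)$, so I would simply rewrite $\mathcal{C}=[\![n,k,\{d_1,d_2\}]\!]_q$ with $d=\min(d_1,d_2)$ and keep the same underlying $C_1$, $C_2$.

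Next I would verify the hypothesis of Theorem~\ref{thm:main}. Since $d_1\ge\min(d_1,d_2)=d>1$ and likewise $d_2\ge d>1$, both component distances strictly exceed $1$. Applying Theorem~\ref{thm:main} then produces a CSS code $\widehat{\mathcal{C}}=[\![n-2,k,\{d_1-1,d_2-1\}]\!]_q$, reducing the length by two while decrementing each component distance by one.

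Finally I would translate back to symmetric notation. The distance of $\widehat{\mathcal{C}}$ is $\min(d_1-1,d_2-1)=\min(d_1,d_2)-1=d-1$, which yields $\widehat{\mathcal{C}}=[\![n-2,k,d-1]\!]_q$ as claimed. I do not expect any genuine obstacle here, since all of the work is already carried by Theorem~\ref{thm:main}; the only items to check are the elementary identity $\min(d_1-1,d_2-1)=\min(d_1,d_2)-1$ and the observation that the single hypothesis $d>1$ automatically forces both $d_1>1$ and $d_2>1$, so that Theorem~\ref{thm:main} applies without further assumptions.
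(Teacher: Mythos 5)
Your argument is correct and is essentially the paper's own proof: both apply Theorem~\ref{thm:main} to the code viewed as $[\![n,k,\{d_1,d_2\}]\!]_q$ and then read off the distance as $\min(d_1-1,d_2-1)=d-1$. Your version merely makes explicit the (easy) check that $d>1$ forces $d_1,d_2>1$.
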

\begin{proof}
  This follows directly from Theorem \ref{thm:main} together with the
  fact that the minimum distance of a CSS code
  $\widehat{\mathcal{C}}=[\![n-2,k,\{d_1-1,d_2-1\}]\!]_q$ is
  $d=\min(d_1,d_2)$.
\end{proof}

In the literature, usually the general situation of puncturing quantum
codes is discussed (see, e.g., \cite[Theorem 6 d)]{CRSS98} for qubit
codes and \cite[Proposition 2.8]{FLX06} for $q$-dimensional quantum
systems), i.e., the existence of a code $[\![n,k,d]\!]_q$ with $n,d\ge
2$ implies the existence of a code $[\![n-1,k,d-1]\!]_q$. Puncturing a
code twice hence results in a code with parameters
$[\![n-2,k,d-2]\!]_q$, while for CSS codes the minimum distance only
drops by one.

Triggered by a question by Chinmay Nirkhe, we note that the results of
Theorems \ref{thm:puncture_CSS} and \ref{thm:main} are the best
possible for general codes.  For asymmetric CSS codes
$[\![n,k,\{d_1,d_2\}]\!]_q$, the Singleton bound is $n+2\ge k+d_1+d_2$
\cite[Lemma 3.3]{SKR09}, while for codes $[\![n,k,d]\!]_q$ the
Singleton bound reads $n+2\ge k+2d$ \cite{KnLa97}. When starting with
a quantum MDS CSS code, i.e., a code that meets the bound with
equality, reducing the length by one and preserving the dimension
implies that one of the minimum distance has to be reduced by one as
well.  Similarly, starting with a QMDS codes $[\![n,k,d]\!]_q$, when
reducing the length by two the distance has to be reduced by at last
one.

\section{Examples}
First, we apply our results to quantum quadratic residue codes (for
details, see \cite[Section IX]{KKKS06} (for $k=1$) and
\cite{DaLi22,DaLi22b}).
\begin{proposition}[quantum QR codes]\label{prop:quantum_QR_codes}
  Let $q$ be a prime power and $n>2$ a prime such that $q$ is a
  quadratic residue modulo $n$. Then there are CSS codes with
  parameters
  \begin{alignat}{5}
     && \mathcal{C}_{\text{QR}}&{}=[\![n,1,d]\!]_q\\
   \text{and}\qquad && \widehat{\mathcal{C}}_{\text{QR}}&{}=[\![n+1,0,d+1]\!]_q,
  \end{alignat}
  where $d$ is the minimum distance of the classical QR code $\mathcal{Q}=[n,(n+1)/2,d]_q$.
\end{proposition}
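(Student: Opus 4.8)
The plan is to read off both codes directly from the CSS construction (the CSS proposition) applied to the quadratic residue codes and their extensions recorded in Proposition~\ref{prop:QRcodes}, splitting each of the two constructions according to the residue of $n$ modulo $4$. The only genuinely quantum input is the containment $C_2^\perp\le C_1$; the remaining work is bookkeeping of dimensions and a short weight argument. Throughout I would use that the two dual pairs at hand are $(\mathcal{Q},\overline{\mathcal{Q}})$ and $(\mathcal{N},\overline{\mathcal{N}})$, with $\mathcal{Q},\mathcal{N}$ of minimum distance $d$ and $\overline{\mathcal{Q}},\overline{\mathcal{N}}$ of minimum distance $d+1$.

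For $\mathcal{C}_{\text{QR}}=[\![n,1,d]\!]_q$ I would take two codes of dimension $(n+1)/2$, so that $k=k_1+k_2-n=1$. For $n\equiv 3\bmod 4$ set $C_1=C_2=\mathcal{Q}$; then $C_2^\perp=\mathcal{Q}^\perp=\overline{\mathcal{Q}}<\mathcal{Q}=C_1$ by properties (iv) and (i), so the containment holds. For $n\equiv 1\bmod 4$ set $C_1=\mathcal{Q}$ and $C_2=\mathcal{N}$; then by property (v) $C_2^\perp=\mathcal{N}^\perp=\overline{\mathcal{Q}}<\mathcal{Q}=C_1$. Since $k=1>0$ we automatically have $C_2^\perp\ne C_1$, so the pure-distance formulas apply: the two distances are $d_1=\wgt(C_1\setminus C_2^\perp)$ and $d_2=\wgt(C_2\setminus C_1^\perp)$, which in each case reduce to $\wgt(\mathcal{Q}\setminus\overline{\mathcal{Q}})$ and $\wgt(\mathcal{N}\setminus\overline{\mathcal{N}})$ (for $n\equiv 3$ both equal the former). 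I would then argue both equal $d$, giving $\min(d_1,d_2)=d$.

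For $\widehat{\mathcal{C}}_{\text{QR}}=[\![n+1,0,d+1]\!]_q$ I would instead use the extended codes, now of length $n+1$ and dimension $(n+1)/2$, so that $k=k_1+k_2-(n+1)=0$. For $n\equiv 3\bmod 4$ the code $\widehat{\mathcal{Q}}$ is self-dual, hence $C_1=C_2=\widehat{\mathcal{Q}}$ gives $C_2^\perp=\widehat{\mathcal{Q}}=C_1$. For $n\equiv 1\bmod 4$ we have $\widehat{\mathcal{Q}}^\perp=\widehat{\mathcal{N}}$ and dually $\widehat{\mathcal{N}}^\perp=\widehat{\mathcal{Q}}$, so $C_1=\widehat{\mathcal{Q}}$ and $C_2=\widehat{\mathcal{N}}$ again yield $C_2^\perp=C_1$. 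In either case $C_2^\perp=C_1$, so the CSS proposition gives $k=0$ and distance $\min(d_1',d_2')$ equal to the common classical minimum distance $d+1$ supplied by property (vi); no pure-distance computation is needed here.

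The step I expect to require the most care is verifying the pure distances for $\mathcal{C}_{\text{QR}}$: one must show that deleting the words of $C_2^\perp$ leaves a minimum-weight codeword of $C_1$ intact, i.e.\ that the weight-$d$ words of $\mathcal{Q}$ (resp.\ $\mathcal{N}$) genuinely survive in $C_1\setminus C_2^\perp$, so that the quantum distance is $d$ and not $d+1$. This hinges entirely on the distance gap between $\mathcal{Q}$ and its subcode $\overline{\mathcal{Q}}$ recorded in properties (ii) and (iii): a weight-$d$ word of $\mathcal{Q}$ cannot lie in $\overline{\mathcal{Q}}$, whose minimum weight is $d+1$, and likewise for the pair $(\mathcal{N},\overline{\mathcal{N}})$. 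Once this is in place the claimed parameters follow immediately.
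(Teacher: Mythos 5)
Your proposal is correct and follows essentially the same route as the paper's own proof sketch: the same choices $C_1=C_2=\mathcal{Q}$ (for $n\equiv 3\bmod 4$) and $C_1=\mathcal{Q}$, $C_2=\mathcal{N}$ (for $n\equiv 1\bmod 4$) for the length-$n$ code, and $C_1=C_2^\perp=\widehat{\mathcal{Q}}$ for the extended, self-dual-type case. Your added observation that a weight-$d$ word of $\mathcal{Q}$ survives in $\mathcal{Q}\setminus\overline{\mathcal{Q}}$ because $\overline{\mathcal{Q}}$ has minimum weight $d+1$ is a correct refinement (pinning the distance to exactly $d$) that the paper leaves implicit.
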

\begin{proofsketch}
  For the code $\mathcal{C}_{\text{QR}}=[\![n,1,d]\!]_q$, in the case
  $n\equiv 3\mod 4$, we use $C_1=C_2=\mathcal{Q}$. By Proposition
  \ref{prop:QRcodes} (i) and (iv),
  $C_2^\perp=\mathcal{Q}^\perp=\overline{\mathcal{Q}}<\mathcal{Q}=C_1$. When
  $n\equiv 1\mod 4$, we use $C_1=\mathcal{Q}$ and $C_2=\mathcal{N}$.
  By Proposition \ref{prop:QRcodes} (i) and (v),
  $C_2^\perp=\mathcal{N}^\perp=\overline{\mathcal{Q}}<\mathcal{Q}=C_1$.
  As the dimensions of $\mathcal{Q}$ and $\overline{\mathcal{Q}}$
  differ by one, we get a CSS code of dimension $k=1$.  Both
  $\mathcal{Q}$ and $\mathcal{N}$ have parameters $[n,(n+1)/2,d]_q$,
  and hence the minimum distance of the CSS code equals that of the
  classical QR codes.

  For the code
  $\widehat{\mathcal{C}}_{\text{QR}}=[\![n+1,0,d+1]\!]_q$, we use
  $C_1=C_2^\perp=\widehat{\mathcal{Q}}$. Then $C_1^\perp=C_2$ equals
  either $\widehat{\mathcal{Q}}$ for $n\equiv 3\mod 4$, or
  $\widehat{\mathcal{N}}$ for $n\equiv 3\mod 1$.  Both
  $\widehat{\mathcal{Q}}$ and $\widehat{\mathcal{N}}$ have parameters
  $[n+1,(n+1)/2,d+1]_q$. Hence then minimum distance of the CSS code
  equals $d+1$.
\end{proofsketch}
Applying Theorem \ref{thm:main} repeatedly to quantum QR codes
$\widehat{\mathcal{C}}_{\text{QR}}$ with $q=2$, we obtain qubit CSS
codes with the parameters shown in Table \ref{tab:CSScodes1}.  The
parameters of the codes in the first column of the table can also be
found in \cite{DaLi22}, together with references for particular codes.
All codes in Table \ref{tab:CSScodes1} have the largest minimum
distance among the explicitly known stabilizer codes (see
\cite{Grassl:codetables}). As quantum codes with $k=0$ are by
definition pure, a code with parameters $[\![n,0,d]\!]_q$ implies the
existence of codes $[\![n-k,k,d-k]\!]_q$ for $0\le k< d$ (see
\cite[Lemma 70]{KKKS06}). For the codes of Table \ref{tab:CSScodes1},
however, the minimum distance of the resulting codes is smaller than
the best known stabilizer codes in \cite{Grassl:codetables}.  The same
is true for quantum QR codes with $k=1$.  The derived CSS codes with
$k>0$ might nonetheless find application in a fault-tolerant context.

\begin{table}[hbt]
  \caption{Parameters of good CSS codes from extended QR codes. When
    the minimum distance is larger than the lower bound by Theorem
    \ref{thm:main}, it is marked with ${}^*$.\label{tab:CSScodes1}}
  \begin{footnotesize}
    \def\arraystretch{1.2}
    \tabcolsep0.8\tabcolsep
    \medskip
  \begin{tabular}{|l|llllll|}
    \hline
    $\widehat{\mathcal{C}}_{\text{QR}}$ &
    \multicolumn{6}{l|}{codes from Theorem \ref{thm:main}}\\
    \hline
    &&&&&&\\[-4ex]
    \hline
    $[\!  [8, 0, 4]\!]_2$  &&&&&& \\
    \hline
    $[\! [24, 0, 8]\!]_2$  &&&&&& \\
    \hline
    $[\![138, 0, 22]\!]_2$ &&&&&& \\
    \hline
    $[\![168, 0, 24]\!]_2$ & $[\![166, 0, 23]\!]_2$ & $[\![164, 0, 22]\!]_2$ & $[\![162, 0, 21]\!]_2$&&&\\
    \hline
    $[\![192, 0, 28]\!]_2$ & $[\![190, 0, 27]\!]_2$ & $[\![188, 0, 26]\!]_2$ & $[\![186, 0, 25]\!]_2$ 
                           & $[\![184, 0, 24]\!]_2$ & $[\![182, 0, 23]\!]_2$ & $[\![180, 0, 22]\!]_2$ \\
    \hline
    $[\![200, 0, 32]\!]_2$ & $[\![198, 0, 31]\!]_2$ & $[\![196, 0, 30]\!]_2$ & $[\![194, 0, 29]\!]_2$&&& \\
    \hline
    $[\![224, 0, 32]\!]_2$ & $[\![222, 0, 31]\!]_2$ & $[\![220, 0, 30]\!]_2$ & $[\![218, 0, 29]\!]_2$
                           & $[\![216, 0, 28]\!]_2$ & $[\![214, 0, 27]\!]_2$ & $[\![212, 0, 26]\!]_2$ \\
                           & $[\![210, 0, 25]\!]_2$ & $[\![208, 0, 25^*]\!]_2$ & $[\![206, 0, 24^*]\!]_2$
                           & $[\![204, 0, 24^*]\!]_2$ & $[\![202, 0, 24^*]\!]_2$ &\\
    \hline
  \end{tabular}
  \end{footnotesize}
\end{table}

\begin{table}[hbt]
  \caption{Parameters of good CSS codes from self-dual binary codes.\label{tab:CSScodes2}}
  \begin{footnotesize}
    \def\arraystretch{1.2}
    \medskip
  \begin{tabular}{|l|llllll|}
    \hline
    $\mathcal{C}$ &
    \multicolumn{6}{l|}{codes from Theorem \ref{thm:main}}\\
    \hline
    &&&&&&\\[-4ex]
    \hline
    $[\![136, 0, 24]\!]_2$ & $[\![134, 0, 23]\!]_2$ & $[\![132, 0, 22]\!]_2$ & $[\![130, 0, 21]\!]_2$&&&\\
    \hline
    $[\![152, 0, 24]\!]_2$ & $[\![150, 0, 23]\!]_2$ & $[\![148, 0, 22]\!]_2$ & $[\![146, 0, 21]\!]_2$
                           & $[\![144, 0, 20]\!]_2$ & $[\![142, 0, 19]\!]_2$ & $[\![140, 0, 18]\!]_2$ \\
    \hline
    $[\![160, 0, 24]\!]_2$ & $[\![158, 0, 23]\!]_2$ & $[\![156, 0, 22]\!]_2$ & $[\![154, 0, 21]\!]_2$ &&&\\
    \hline
  \end{tabular}
  \end{footnotesize}
\end{table}

Note that the binary extended QR code of length $152$ has only minimum
distance $20$, while there is a self-dual binary code $[152,76,24]_2$
(see \cite{Grassl:codetables}).  This yields a CSS code
$[\![152,0,24]\!]_2$.  Moreover, there are self-dual binary codes
$[136,68,24]_2$ and $[160,80,24]_2$ which yield CSS codes
$[\![134,0,24]\!]_2$ and $[\![160,0,24]\!]_2$, respectively.  Applying
Theorem \ref{thm:main} to these codes yields the codes shown in Table
\ref{tab:CSScodes2}.

For the codes $[\![206,0,24^*]\!]_2$, $[\![204,0,24^*]\!]_2$, and
$[\![202,0,24^*]\!]_2$ derived from the code
$\widehat{\mathcal{C}}_{\text{QR}}=[\![224,0,32]\!]_2$, the true
minimum distance is larger than what is guaranteed by Theorem
\ref{thm:main}.  This is related to the fact that when puncturing a
classical code at $m$ positions, the minimum distance will be larger
than $d-m$ in case every codeword of minimum weight has at least one
zero at the $m$ positions \cite{GrWh04}. For the code
$[\![202,0,24^*]\!]_2$, by Theorem \ref{thm:main}, the minimum
distance is at least $21$.  Using Magma V2.27-3 \cite{Magma}, we have
verified that the minimum distance is $24$.  The calculation took
about $20$ hours wall-clock time using $48$ cores.  When we apply
Theorem \ref{thm:main} repeatedly, then the minimum distance of the
codes in the sequence is monotonically decreasing.  This implies that
the codes of length $206$ and $204$ have minimum distance at least
$24$ as well, while the lower bounds are $23$ and $22$, respectively.

In general, to compute the minimum distance of a CSS code
$\mathcal{C}=[\![n,0,\{d_1,d_2\}]\!]_q$ derived from classical codes
$C_1=[n,k,d_1]_q$ and $C_2=[n,n-k,d_2]_q$ with $C_2^\perp=C_1$, one
would have to compute the minimum distance of both $C_1$ and $C_2$. By
Proposition \ref{prop:QRcodes} (vii), the automorphism group of an
extended QR code $\widehat{\mathcal{Q}}=[n+1,(n+1)/2,d+1]_q$ contains
the group $PSL_2(n)$. In turn $PSL_2(n)$ contains an involution $\tau$
with zero ($n\equiv 3\bmod 4$) or two ($n\equiv 1 \bmod 4$) fixed
points. We can select the positions $I_s$ for shortening and the
positions $I_p$ for puncturing such that $I_s^\tau = I_p$. In this
case, the shortened and punctured codes $C_1$ and $C_2$ will be
permutation equivalent, i.e., they have the same parameters. Hence we
have to compute the minimum distance of only one of the codes.

For the code  $[\![208,0,25^*]\!]_2$, the lower bound on the minimum
distance by Theorem \ref{thm:main} is $24$, but it turns out that the
true minimum distance is $25$. Using $48$ cores, the verification
took about $125$ hours wall-clock time.

The minimum distance of the codes in Tables \ref{tab:CSScodes1} and
\ref{tab:CSScodes2} is the largest among the codes that have been
explictly constructed \cite{Grassl:codetables}.  At the same time, for
$n>100$, the Gilbert-Varshamov bound in \cite[Corollary 2.3,
  1)]{FeMa04} implies the \emph{existence} of quantum codes
$[\![n,0,d]\!]_q$ for even $n$ with larger minimum distance.  However,
for length $n>100$ it is essentially infeasible to determine the
minimum distance of a random quantum code.

Further examples can be derived from the extended ternary QR code
$C=[60,30,18]_3$, which yields a CSS code $[\![60,0,18]\!]_3$. From
this code, we can derive the codes in Table \ref{tab:CSScodes3} which
achieve the Gilbert-Varshamov (GV) bound in \cite[Corollary
  2.3]{FeMa04}. The minimum distance of the codes
$[\![59,1,17^*]\!]_3$ and $[\![60,0,18^*]\!]_3$ is larger than the GV
bound.  The codes in Table \ref{tab:CSScodes3} have the largest
minimum distance among codes known to the author.

\begin{table}
  \caption{Parameters of good ternary CSS codes from the self-dual
    ternary extended QR code $[60,30,18]_3$.\label{tab:CSScodes3}}
  \medskip
  
    \centerline{
    \begin{footnotesize}
    \def\arraystretch{1.2}
    \tabcolsep0.8\tabcolsep
    \medskip
    \begin{tabular}{|l|l|l|l|l|l|}
      \hline
      $[\![56,0,16]\!]_3$ & &\\
      \hline
      $[\![57,0,16]\!]_3$ & $[\![57,1,16]\!]_3$ & $[\![57,2,15]\!]_3$ \\
      \hline
      $[\![58,0,17]\!]_3$ & $[\![58,1,16]\!]_3$ & $[\![58,2,16]\!]_3$ \\
      \hline
      $[\![59,0,17]\!]_3$ & $[\![59,1,17^*]\!]_3$ & \\
      \hline
      $[\![60,0,18^*]\!]_3$ & &\\
      \hline
    \end{tabular}
    \end{footnotesize}}
\end{table}

\section{Conclusions}
We have presented a propagation rule for the parameters of CSS codes
that results in better parameters than the corresponding construction
for general quantum error-correcting codes. In the case of qubit codes
$[\![n,0,d]\!]_2$, this allows us to explicitly construct CSS codes
with the largest minimum distance among the known codes.  As already
mentioned out in last section, counting arguments imply
the existence of possibly non-CSS codes with even larger minimum
distance, but these have yet to be constructed.

\section*{Acknowledgments}
The author would like to thank Petr Lison\v{e}k for discussions during
the workshop WCC 2022 that lead to the results presented in this
work.  He also thanks the referees for pointing out the connection to
the results by La Guardia \cite{LaGu13}.

The `International Centre for Theory of Quantum Technologies’ project
(contract no. MAB/2018/5) is carried out within the International
Research Agendas Programme of the Foundation for Polish Science
co-financed by the European Union from the funds of the Smart Growth
Operational Programme, axis IV: Increasing the research potential
(Measure 4.3). 

\section*{References}


\begin{thebibliography}{10}

\bibitem{CaSh96}
A.~Robert Calderbank and Peter~W. Shor.
\newblock Good quantum error-correcting codes exist.
\newblock {\em Physical Review~A}, 54(2):1098--1105, August 1996.
\newblock \href {https://doi.org/10.1103/PhysRevA.54.1098}
  {\path{doi:10.1103/PhysRevA.54.1098}}.

\bibitem{Ste96}
Andrew~M. Steane.
\newblock Simple quantum error correcting codes.
\newblock {\em Physical Review A}, 54(6):4741--4751, December 1996.
\newblock \href {https://doi.org/10.1103/PhysRevA.54.4741}
  {\path{doi:10.1103/PhysRevA.54.4741}}.

\bibitem{Got98}
Daniel Gottesman.
\newblock Theory of fault-tolerant quantum computation.
\newblock {\em Physical Review A}, 57(1):127--137, January 1998.
\newblock \href {https://doi.org/10.1103/PhysRevA.57.127}
  {\path{doi:10.1103/PhysRevA.57.127}}.

\bibitem{CEL99}
G{\'e}rard Cohen, Sylvia Encheva, and Simon Litsyn.
\newblock On binary constructions of quantum codes.
\newblock {\em IEEE Transactions on Information Theory}, 45(7):2495--2498,
  November 1999.
\newblock \href {https://doi.org/10.1109/18.796389}
  {\path{doi:10.1109/18.796389}}.

\bibitem{HsLG11}
Min-Hsiu Hsieh and Fran{\c{c}}ois Le~Gall.
\newblock {NP}-hardness of decoding quantum error-correction codes.
\newblock {\em Physical Review A}, 83(5):052331, May 2011.
\newblock \href {https://doi.org/10.1103/PhysRevA.83.052331}
  {\path{doi:10.1103/PhysRevA.83.052331}}.

\bibitem{MS77}
Florence~J. MacWilliams and Neil~J.~A. Sloane.
\newblock {\em The Theory of Error-Correcting Codes}.
\newblock North-Holland, Amsterdam, 1977.

\bibitem{KKKS06}
Avanti Ketkar, Andreas Klappenecker, Santosh Kumar, and Pradeep~Kiran
  Sarvepalli.
\newblock Nonbinary stabilizer codes over finite fields.
\newblock {\em IEEE Transactions on Information Theory}, 52(11):4892--4914,
  November 2006.
\newblock \href {https://doi.org/10.1109/TIT.2006.883612}
  {\path{doi:10.1109/TIT.2006.883612}}.

\bibitem{Gra21}
Markus Grassl.
\newblock Algebraic quantum codes: linking quantum mechanics and discrete
  mathematics.
\newblock {\em International Journal of Computer Mathematics: Computer Systems
  Theory}, 6(4):243--250, 2021.
\newblock \href {https://doi.org/10.1080/23799927.2020.1850530}
  {\path{doi:10.1080/23799927.2020.1850530}}.

\bibitem{HuPl10}
W.~Cary Huffman and Vera Pless.
\newblock {\em Fundamentals of Error-Correcting Codes}.
\newblock Cambridge University Press, 2010.
\newblock \href {https://doi.org/10.1017/CBO9780511807077}
  {\path{doi:10.1017/CBO9780511807077}}.

\bibitem{BBFKKW06}
Anton Betten, Michael Braun, Harald Fripertinger, Adalbert Kerber, Axel
  Kohnert, and Alfred Wassermann.
\newblock {\em Error-Correcting Linear Codes}.
\newblock Springer, Berlin, 2006.
\newblock \href {https://doi.org/10.1007/3-540-31703-1}
  {\path{doi:10.1007/3-540-31703-1}}.

\bibitem{IoMe07}
Lev Ioffe and Marc M{\'e}zard.
\newblock Asymmetric quantum error-correcting codes.
\newblock {\em Physical Review A}, 75(3):03245, March 2007.
\newblock \href {https://doi.org/10.1103/PhysRevA.75.032345}
  {\path{doi:10.1103/PhysRevA.75.032345}}.

\bibitem{SKR09}
Pradeep~Kiran Sarvepalli, Andreas Klappenecker, and Martin R{\"o}tteler.
\newblock Asymmetric quantum codes: construction, bounds and performance.
\newblock {\em Proceedings of the Royal Society A}, 465(2105):1645--1672, May
  2009.
\newblock \href {https://doi.org/10.1098/rspa.2008.0439}
  {\path{doi:10.1098/rspa.2008.0439}}.

\bibitem{CRSS98}
A.~Robert Calderbank, Eric~M. Rains, P.~W. Shor, and Neil J.~A. Sloane.
\newblock Quantum error correction via codes over {$GF(4)$}.
\newblock {\em IEEE Transactions on Information Theory}, 44(4):1369--1387,
  April 1998.
\newblock \href {https://doi.org/10.1109/18.681315}
  {\path{doi:10.1109/18.681315}}.

\bibitem{LaGu13}
Giuliano~G. La~Guardia.
\newblock Asymmetric quantum codes: new codes from old.
\newblock {\em Quantum Information Processing}, 12:2771–2790, August 2013.
\newblock \href {https://doi.org/s11128-013-0562-4}
  {\path{doi:s11128-013-0562-4}}.

\bibitem{FLX06}
Keqin Feng, San Ling, and Chaoping Xing.
\newblock Asymptotic bounds on quantum codes from algebraic geometry codes.
\newblock {\em IEEE Transactions on Information Theory}, 52(3):986--991, March
  2006.
\newblock \href {https://doi.org/10.1109/TIT.2005.862086}
  {\path{doi:10.1109/TIT.2005.862086}}.

\bibitem{KnLa97}
Emanuel Knill and Raymond Laflamme.
\newblock Theory of quantum error-correcting codes.
\newblock {\em Physical Review A}, 55(2):900--911, February 1997.
\newblock \href {https://doi.org/10.1103/PhysRevA.55.900}
  {\path{doi:10.1103/PhysRevA.55.900}}.

\bibitem{DaLi22}
Reza Dastbasteh and Petr Lison{\v{e}}k.
\newblock A new family of quantum codes from duadic codes.
\newblock In {\em WCC 2022: The Twelfth International Workshop on Coding and
  Cryptography}, Rostock, March 2022.
\newblock URL:
  \url{https://www.wcc2022.uni-rostock.de/storages/uni-rostock/Tagungen/WCC2022/Papers/WCC_2022_paper_19.pdf}.

\bibitem{DaLi22b}
Reza Dastbasteh and Petr Lison{\v{e}}k.
\newblock New quantum codes from self-dual codes over $\mathbb{F}_4$.
\newblock Preprint arXiv:2211.00891 [cs.IT], November 2022.
\newblock \href {https://doi.org/10.48550/arXiv.2211.00891}
  {\path{doi:10.48550/arXiv.2211.00891}}.

\bibitem{Grassl:codetables}
Markus Grassl.
\newblock {Bounds on the minimum distance of linear codes and quantum codes}.
\newblock Online available at \url{http://www.codetables.de}, 2007.
\newblock Accessed on 2022-08-10.

\bibitem{GrWh04}
Markus Grassl and Greg White.
\newblock New good linear codes by special puncturings.
\newblock In {\em Proceedings 2004 International Symposium on Information
  Theory (ISIT 2004)}, page 454, 2004.
\newblock \href {https://doi.org/10.1109/ISIT.2004.1365491}
  {\path{doi:10.1109/ISIT.2004.1365491}}.

\bibitem{Magma}
Wieb Bosma, John~J. Cannon, and Catherine Playoust.
\newblock {The Magma Algebra System I: The User Language}.
\newblock {\em Journal of Symbolic Computation}, 24(3-4):235--265, 1997.
\newblock \href {https://doi.org/10.1006/jsco.1996.0125}
  {\path{doi:10.1006/jsco.1996.0125}}.

\bibitem{FeMa04}
Keqin Feng and Zhi Ma.
\newblock A finite {G}ilbert-{V}arshamov bound for pure stabilizer quantum
  codes.
\newblock {\em IEEE Transactions on Information Theory}, 50(12):3323--3325,
  December 2004.
\newblock \href {https://doi.org/10.1109/TIT.2004.838088}
  {\path{doi:10.1109/TIT.2004.838088}}.

\end{thebibliography}
\end{document}